\documentclass[11pt]{article}

\usepackage[bibliosources=refs.bib]{pomegranate}

\DeclareOperator{\arg}
\DeclareOperator{\Arg}
\DeclareOperator{\poly}
\newcommand{\cA}{\mathcal A}
\newcommand{\ip}[2]{\dotprod{#1,#2}}
\newcommand{\GammaSector}[1]{\Gamma_{#1}}

\title{G{\aa}rding's Theorem for Posynomials}
\author[1]{Nima Anari}
\affil[1]{Stanford University, \texttt{anari@stanford.edu}}
\date{}

\begin{document}

\maketitle

\begin{abstract}
We extend G{\aa}rding's theorem to homogeneous posynomials: if a finite
positive sum of monomials with arbitrary nonnegative real exponents is
zero-free on a product of right half-planes, then its degree-normalized root
is concave.  Consequently, zero-freeness in a sector of aperture
\(\alpha\pi\) implies \(\alpha\)-fractional log-concavity.  This sharpens
generic mixing and domain-sparsification guarantees for fixed-size matchings
and nonsymmetric determinantal point processes.  The result was developed in
an AI-assisted interaction initiated and checked by the author; Codex also
assisted with assembling and typesetting the manuscript.
\end{abstract}

\section{Introduction}

G{\aa}rding's theorem turns a complex-analytic hypothesis into a convexity
statement: the degree-normalized root of a homogeneous hyperbolic polynomial
is concave on its hyperbolicity cone \cite{Garding1959}.  For a homogeneous
polynomial with nonnegative coefficients, zero-freeness on a product of right
half-planes therefore implies log-concavity on the positive orthant.  This
connection is now a standard tool in the study of negative dependence,
log-concave generating polynomials, and Markov-chain sampling
\cite{BorceaBrandenLiggett2009,AnariOveisGharanVinzant2018,BrandenHuh2019}.

We prove the same theorem for \emph{posynomials}.  A posynomial is a finite
sum
\[
  p(x)=\sum_{a\in\cA}c_a x^a
  =\sum_{a\in\cA}c_a\prod_{i=1}^n x_i^{a_i},
  \qquad c_a>0,\qquad a\in\R_{\ge0}^n.
\]
It is homogeneous of degree \(d\) if \(\sum_i a_i=d\) for every
\(a\in\cA\).  On
\(\C\setminus(-\infty,0]\), real powers are defined using the principal
logarithm; this convention defines a holomorphic branch of \(p\) on the
corresponding product domain.

For \(0<\alpha\le1\), let
\[
  \GammaSector{\alpha}
  =
  \set{re^{i\theta}\given r>0,\ \abs{\theta}<\alpha\pi/2}.
\]
Thus \(\GammaSector{1}\) is the open right half-plane.  We say that \(p\) is
\(\GammaSector{\alpha}\)-stable if its principal branch has no zero on
\(\GammaSector{\alpha}^n\).
We use \(\arg\parens{z}\in(-\pi,\pi)\) for the principal argument; when a
zero-free path specifies a continuous lift, we write \(\Arg\) instead.

\begin{theorem}[G{\aa}rding's theorem for posynomials]\label{thm:main}
Let \(p\) be a homogeneous posynomial of degree \(d>0\).  If \(p\) is
\(\GammaSector{1}\)-stable, then
\[
  x\longmapsto p(x)^{1/d}
\]
is concave on \(\R_{>0}^n\).  In particular, \(\log p\) is concave on
\(\R_{>0}^n\).
\end{theorem}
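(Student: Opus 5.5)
The plan is to reduce the statement to log-concavity along lines and then prove that by a one-variable complex-analytic argument. This argument replaces the root-counting in G{\aa}rding's original proof, which is unavailable because \(t\mapsto p(x+ty)\) is not a polynomial when the exponents are non-integral.

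\emph{Reduction.} Since \(p>0\) on \(\R_{>0}^n\), set \(f=p^{1/d}\). It is positive and homogeneous of degree one. If \(\log p\) is concave, then \(f\) is log-concave, hence quasi-concave. A positive, degree-one homogeneous, quasi-concave function on a convex cone is concave: for \(x,y\) in the cone, rescale them to \(x/f(x)\) and \(y/f(y)\), which lie in the superlevel set \(\set{f\ge1}\). Take the convex combination with weights proportional to \(f(x),f(y)\), use convexity of \(\set{f\ge1}\), and rescale back; this gives \(f(x+y)\ge f(x)+f(y)\). So it suffices to prove that \(\log p\) is concave. Concavity is a statement about lines, so we fix \(a\in\R_{>0}^n\) and \(y\in\R^n\) and show that \(t\mapsto \log p(a+ty)\) has nonpositive second derivative at \(t=0\).

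\emph{Complexification.} Let \(g(z)=p(a+zy)\). For \(\abs{\Re z}<\varepsilon\) with \(\varepsilon\) small, every coordinate \(a_i+zy_i\) lies in \(\GammaSector{1}\). Hence, by \(\GammaSector{1}\)-stability, \(g\) is holomorphic and zero-free on a vertical strip \(S\). Because \(S\) is simply connected, \(\log g\) has a holomorphic branch there that is real on \(\R\cap S\), where \(g>0\). Write \(\log g=u+iv\). Then \(u=\log\abs{g}\) is harmonic, so
\[
  \partial_t^2 u(0)=-\partial_s^2 u(0),\qquad z=t+is.
\]
Thus it is enough to show that \(s\mapsto \log\abs{p(a+isy)}\) is convex at \(s=0\). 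This function is even in \(s\), since \(p\) has real coefficients and so \(p(\bar w)=\overline{p(w)}\). It therefore suffices to show that \(\abs{p(a+isy)}\ge p(a)\) for small \(s\), or directly that its second derivative at \(0\) is nonnegative.

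\emph{Key step (main obstacle).} For a single monomial the inequality is clear: \(\abs{(a_i+isy_i)^{\alpha}}=(a_i^2+s^2y_i^2)^{\alpha/2}\ge a_i^\alpha\). For a sum, however, the phases \(\sum_i\alpha_i\arg(a_i+isy_i)\) of different monomials can cancel. This cancellation is exactly what stability must rule out. I would compute \(\partial_s^2\log\abs{g}\) at \(s=0\) explicitly as a variance-type expression: the monomials contribute a positive term \(\mathbb E_\mu[\sum_i\alpha_i y_i^2/a_i^2]\) and a negative term \(\operatorname{Var}_\mu(\sum_i\alpha_i y_i/a_i)\), where \(\mu\) is the probability distribution on \(\cA\) proportional to \(c_\alpha a^\alpha\).

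To control the variance I would use stability together with homogeneity. Consider the two-variable function \(\zeta\mapsto p(a+\zeta y)\) for \(\zeta\) with \(a+\zeta y\in\GammaSector{1}^n\), and rotate: by homogeneity, \(p(\lambda w)=\lambda^d p(w)\) for \(\lambda\in\GammaSector{1}\) with \(\lambda w\) still in the domain. Along zero-free paths from the positive orthant this gives the argument bound \(\abs{\Arg p(w)}<d\pi/2\) on \(\GammaSector{1}^n\). Combining this bound with the Cauchy--Riemann relation \(\partial_t u=\partial_s v\), I would try to show that \(z\mapsto \partial_z\log g\) restricted to the strip is a Pick-type function, i.e.\ its imaginary part has a fixed sign above the real axis. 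Pick functions are increasing on the real axis, which is exactly the needed sign of \(\partial_t^2 u\).

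If the direct Pick argument fails, my fallback is approximation. Approximate the exponents by rationals with common denominator \(N\), write \(p(x)=q(x^{1/N})\) with \(q\) a polynomial stable on \(\GammaSector{1/N}^n\), and attempt a sector version of G{\aa}rding for \(q\), passing to the limit using continuity of \(\nabla^2\log p\) in the exponents. Of the two routes, I expect the argument-bound-to-Pick-function step to be the real difficulty.
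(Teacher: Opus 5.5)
Your reductions are correct. Log-concavity plus degree-one homogeneity gives concavity of \(p^{1/d}\). Harmonicity of \(\log\abs{g}\) trades \(\partial_t^2\) for \(-\partial_s^2\). Your variance-versus-expectation inequality is exactly log-concavity along the line. But these steps only restate the theorem, and the step that carries its content is left open.

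In particular, the bound \(\abs{\Arg p(w)}<d\pi/2\) on \(\GammaSector{1}^n\) does not follow from homogeneity and zero-freeness along paths. The function \(q(\zeta_1,\zeta_2)=\zeta_2^de^{\zeta_1/\zeta_2}\) is homogeneous of degree \(d\), zero-free on \(\GammaSector{1}^2\), and positive on the orthant. Yet its argument is unbounded there, and \(\log q\) is not concave: \(\zeta_2\mapsto\log q(1,\zeta_2)\) is strictly convex on \((0,2/d)\). So any proof must use the posynomial structure essentially.

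In the paper this happens through angular contraction (\cref{lem:angular}): if every \(w_i\) has argument in \([0,\phi]\) with \(\phi<\pi\), then \(0\le\Arg p(w)\le d\phi\). It is proved by substituting \(w_i=r_iu^{\theta_i/\phi}\), which gives a one-variable posynomial zero-free on \(\C\setminus(-\infty,0]\). One then shows that its argument is nondecreasing along rays, using a Descartes rule for real exponents and the Sendov--Sendov crossing argument. Nothing in your proposal supplies this.

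Your bound itself is true; it follows from \cref{lem:angular} after rotating by \(e^{i\pi/2}\). It would also suffice: apply it to \((\zeta_1,\zeta_2)\mapsto p(\zeta_1x+\zeta_2y)\) with \(x,y\in\R_{>0}^n\) and let \(\arg\zeta_2\) run over its admissible interval, which gives \(0\le\Arg p(tx+y)\le d\arg t\). But proving the bound is the whole difficulty.

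Even granting the bound, your route from it to concavity does not work. A pointwise bound on \(v=\Arg g\) says nothing about the signs of its derivatives. On your strip, \(\Im\parens{\partial_z\log g}(t+is)=-\partial_s\log\abs{g(t+is)}\), so the sign condition you want is essentially your target again, in a stronger form. Moreover, with the sign you state, a Pick function would give convexity rather than concavity.

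A sector bound yields concavity only through a global representation on a half-plane. That is why the paper takes \(x,y\in\R_{>0}^n\) and \(g(z)=p(zx+y)\) on all of \(\C\setminus(-\infty,0]\), rather than an arbitrary direction on a strip. For \(\Im\parens{z}>0\), all coordinates have arguments in \([0,\arg z]\), and angular contraction gives \(0\le\Arg g(z)\le d\arg z\). So \(g^{1/d}\) is a complete Bernstein function, concave on \(\R_{>0}\) by its Pick representation, and a perspective argument turns this into concavity of \(p^{1/d}\).

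Your fallback does not close the gap either. \(\GammaSector{1}\)-stability is not open in the exponents: \(x+y\) is stable, while \(x^\beta+y^\beta\) is not for \(\beta>1\). So rational approximants need not be stable. And for rational exponents, writing \(p(x)=q(x^{1/N})\) turns the claim into concavity of \(q(x^{1/N})^{1/d}\) for a \(\GammaSector{1/N}\)-stable polynomial \(q\). That is exactly the sharp sector statement (\cref{cor:sector}) for polynomials, which was previously available only with the factor-two loss. The fallback is therefore circular.
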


The half-plane statement contains the sector statement by a power map.

\begin{corollary}[Sharp fractional log-concavity]\label{cor:sector}
Let \(0<\alpha\le1\), and let \(p\) be a homogeneous posynomial of degree
\(d>0\).  If \(p\) is \(\GammaSector{\alpha}\)-stable, then
\[
  x\longmapsto p(x^\alpha)^{1/(\alpha d)}
\]
is concave on \(\R_{>0}^n\).  Consequently,
\(x\mapsto\log p(x^\alpha)\) is concave.
\end{corollary}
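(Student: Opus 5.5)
Define the auxiliary posynomial
\[
  q(y)=p(y^\alpha)=\sum_{a\in\cA}c_a\,y^{\alpha a},
\]
where \(y^\alpha=(y_1^\alpha,\dots,y_n^\alpha)\). The plan is to show that \(q\) is a homogeneous posynomial of degree \(\alpha d\) that is \(\GammaSector{1}\)-stable, and then to apply Theorem~\ref{thm:main} to \(q\).

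First, \(q\) is a posynomial. Its exponents \(\alpha a\) lie in \(\R_{\ge0}^n\), its coefficients \(c_a\) are unchanged and positive, and every exponent has total degree \(\alpha\sum_i a_i=\alpha d>0\). So \(q\) is a homogeneous posynomial of degree \(\alpha d\).

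The main point is to check that the principal branch of \(q\) on \(\GammaSector{1}^n\) agrees with the principal branch of \(p\) evaluated at \(y^\alpha\). Fix \(y\in\GammaSector{1}^n\) and write \(y_j=r_je^{i\theta_j}\) with \(\abs{\theta_j}<\pi/2\). The principal power is \(y_j^\alpha=r_j^\alpha e^{i\alpha\theta_j}\), and \(\abs{\alpha\theta_j}<\alpha\pi/2\le\pi/2\). Hence the map \(y\mapsto y^\alpha\) sends \(\GammaSector{1}^n\) bijectively onto \(\GammaSector{\alpha}^n\), with inverse \(x\mapsto x^{1/\alpha}\).

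We also need the exponents to compose correctly. The principal logarithm of \(y_j^\alpha\) is \(\alpha\log r_j+i\alpha\theta_j\), because \(\alpha\theta_j\in(-\pi,\pi)\). This equals \(\alpha\Log y_j\). Therefore
\[
  (y_j^\alpha)^{a_j}=\exp\bigl(a_j\alpha\Log y_j\bigr)=y_j^{\alpha a_j}
\]
with principal branches throughout. Summing over monomials gives \(q(y)=p(y^\alpha)\) as holomorphic functions on \(\GammaSector{1}^n\).

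Since \(p\) has no zeros on \(\GammaSector{\alpha}^n\), it follows that \(q\) has no zeros on \(\GammaSector{1}^n\), so \(q\) is \(\GammaSector{1}\)-stable.

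Theorem~\ref{thm:main} now applies to \(q\). It says that \(y\mapsto q(y)^{1/(\alpha d)}=p(y^\alpha)^{1/(\alpha d)}\) is concave on \(\R_{>0}^n\). This is exactly the first claim of the corollary.

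For the second claim, the logarithm of a positive concave function is concave, since \(\log\) is concave and increasing. Hence \(\log p(x^\alpha)=\alpha d\,\log\bigl(p(x^\alpha)^{1/(\alpha d)}\bigr)\) is concave.

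The only step that needs any care is the branch compatibility, namely \(\Log(y^\alpha)=\alpha\Log y\). It holds because the sector \(\GammaSector{1}\) is mapped inside \(\C\setminus(-\infty,0]\) without any argument wrapping, which uses \(\alpha\le1\). Everything else is bookkeeping.
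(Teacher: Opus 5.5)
Your proposal is correct and follows essentially the same route as the paper: set $q(z)=p(z^\alpha)$, observe that it is a homogeneous posynomial of degree $\alpha d$ which is $\GammaSector{1}$-stable because the coordinatewise power map sends $\GammaSector{1}$ into $\GammaSector{\alpha}$, and apply \cref{thm:main}. Your explicit check that the principal branches compose correctly is a detail the paper leaves implicit. That check uses $\alpha\le1$, so that $\abs{\alpha\theta_j}<\pi$ and no argument wraps around.
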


The dependence on \(\alpha\) is best possible.  Indeed,
\(p(x,y)=x^\beta+y^\beta\) is \(\GammaSector{\alpha}\)-stable exactly when
\(\alpha\beta\le1\), and
\((x^{\alpha\beta}+y^{\alpha\beta})^{1/(\alpha\beta)}\) is concave on the
positive orthant exactly in the same range.

We consider real exponents in order to sharpen the known connection between
sector-stability and fractional log-concavity.  A homogeneous polynomial
\(f\) is \(\alpha\)-fractionally log-concave if
\[
  x\longmapsto\log f(x^\alpha)
\]
is concave on the positive orthant.  Earlier work showed that
\(\GammaSector{2\alpha}\)-stability implies
\(\alpha\)-fractional log-concavity
\cite[Lemma~69]{AlimohammadiAnariShiragurVuong2021}.  Thus a sector of
aperture \(\beta\pi\) yielded only \(\beta/2\)-fractional log-concavity.
\Cref{cor:sector} removes this factor-of-two loss.

This loss propagated into the polynomial exponents in subsequent sampling
results.  Removing it yields quadratic improvements in the resulting
mixing-time and domain-sparsification guarantees, including those for
fixed-size matchings and nonsymmetric determinantal point processes
\cite{AnariJainKoehlerPhamVuong2021a,AnariDerezinskiVuongYang2022}.

\section{Ray Monotonicity for Real Exponents}\label{sec:ray}

We first record the real-exponent form of Descartes' rule.  For a finite
sequence of nonzero real numbers \(b_0,\ldots,b_N\), let
\(V(b_0,\ldots,b_N)\) denote its number of sign changes.

\begin{lemma}[Generalized Descartes rule]\label{lem:descartes}
Let \(\lambda_0<\cdots<\lambda_N\) be real numbers and let
\[
  F(t)=\sum_{j=0}^N b_jt^{\lambda_j},
  \qquad b_j\in\R\setminus\set{0}.
\]
The number of zeros of \(F\) on \(\R_{>0}\), counted with multiplicity, is at
most \(V(b_0,\ldots,b_N)\).
\end{lemma}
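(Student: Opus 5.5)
We argue by induction on the number of sign changes \(V=V(b_0,\ldots,b_N)\), using the classical Rolle-type trick of dividing by a power of \(t\) and differentiating.

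\textbf{Base case.} If \(V=0\), all \(b_j\) share one sign. Then \(F(t)\) is a sum of terms of that sign for every \(t>0\), so \(F\) has no positive zeros.

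\textbf{Inductive step.} Suppose \(V\ge1\). Choose an index \(k\) with \(b_{k-1}b_k<0\), and pick an exponent \(\mu\) with \(\lambda_{k-1}<\mu<\lambda_k\). Set
\[
  G(t)=\frac{d}{dt}\bigl(t^{-\mu}F(t)\bigr)
  =\sum_{j=0}^N b_j(\lambda_j-\mu)t^{\lambda_j-\mu-1}.
\]
This is again a generalized polynomial with strictly increasing exponents \(\lambda_j-\mu-1\). Its coefficients \(b_j(\lambda_j-\mu)\) are all nonzero, since \(\mu\) equals no \(\lambda_j\).

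\textbf{Counting sign changes of \(G\).} The factor \(\lambda_j-\mu\) is negative for \(j\le k-1\) and positive for \(j\ge k\). Multiplying a block of coefficients by a constant sign leaves its internal sign changes unchanged. Hence all sign changes among the \(b_j\) survive, except the one between positions \(k-1\) and \(k\). That change is destroyed, because the two flipped signs now agree. Therefore \(V(G)=V-1\), and by induction \(G\) has at most \(V-1\) positive zeros counted with multiplicity.

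\textbf{Transferring the bound back to \(F\).} This is the main step to check carefully. Let \(H(t)=t^{-\mu}F(t)\), which has exactly the same positive zeros as \(F\) with the same multiplicities, since \(t^{-\mu}\) is smooth and nonvanishing on \(\R_{>0}\). Suppose \(H\) has distinct zeros \(t_1<\cdots<t_m\) with multiplicities \(m_1,\ldots,m_m\), and total \(M=\sum m_i\). Then:
\begin{itemize}
  \item each \(t_i\) is a zero of \(H'=G\) of multiplicity \(m_i-1\), contributing \(M-m\) zeros in total;
  \item Rolle's theorem gives a further zero of \(G\) strictly inside each interval \((t_i,t_{i+1})\), contributing \(m-1\) more.
\end{itemize}
So \(G\) has at least \(M-1\) zeros counted with multiplicity, giving \(M-1\le V-1\), that is, \(M\le V\).

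\textbf{Remarks on the setting.} Multiplicity here means the order of vanishing of the real-analytic function \(F\) on \((0,\infty)\). This order is finite unless \(F\equiv0\). That degenerate case cannot occur: the functions \(t^{\lambda_j}\) with distinct exponents are linearly independent, which one sees by comparing growth rates as \(t\to\infty\). If there are infinitely many zeros, the same argument applied to any finite subcollection of them yields the contradiction directly.
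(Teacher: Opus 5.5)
Your proof is correct, but it uses the classical Laguerre argument rather than the paper's route.

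The paper inducts on the number of terms. After normalizing $\lambda_0=0$, differentiation deletes the constant term $b_0$ and leaves a generalized polynomial whose coefficient signs are those of $b_1,\ldots,b_N$. The plain Rolle bound $Z(G)\le Z(G')+1$ closes the induction only when $b_0b_1<0$. When $b_0b_1>0$, the paper needs the sharper bound $Z(G)\le Z(G')$. It gets this from an endpoint argument near $t=0$: choose $\delta$ so that $G(\delta)$ and $G'(\delta)$ share the sign of $b_0$, then apply the mean value theorem before the first zero of $G$ to produce an extra zero of $G'$.

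You instead induct on $V$ and divide by $t^{\mu}$ with $\mu$ placed strictly inside a sign change. Differentiation then keeps all $N+1$ terms but removes exactly one sign change. The unrefined Rolle count suffices, with no case split and no analysis of behavior as $t\to0^{+}$.

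So the paper's version shrinks the number of terms at each step, at the price of that endpoint estimate, while yours is shorter and uniform. Your closing remarks also make explicit two points the paper leaves implicit: that multiplicities are finite (real-analyticity plus linear independence of the $t^{\lambda_j}$), and how to handle a possibly infinite zero set.
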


\begin{proof}
We argue by induction on the number of terms.  Multiplication by
\(t^{-\lambda_0}\) does not change the positive zeros, so write
\[
  G(t)=b_0+\sum_{j=1}^N b_jt^{\mu_j},
  \qquad 0<\mu_1<\cdots<\mu_N.
\]
The derivative factors as
\[
  G'(t)=t^{\mu_1-1}H(t),
  \qquad
  H(t)=\sum_{j=1}^N b_j\mu_jt^{\mu_j-\mu_1}.
\]
By induction, the number of positive zeros of \(G'\) is at most
\(V(b_1,\ldots,b_N)\).

Let \(Z(G)\) count the positive zeros of \(G\) with multiplicity.  Rolle's
theorem, including the multiplicities at the zeros themselves, gives
\(Z(G)\le Z(G')+1\).  If \(b_0\) and \(b_1\) have the same sign, this can be
improved to \(Z(G)\le Z(G')\).  To see this, take \(\delta>0\) small enough
that both \(G(\delta)\) and \(G'(\delta)\) have that common sign.  If \(x\) is
the first positive zero of \(G\), the mean value theorem gives a point in
\((\delta,x)\) where \(G'\) has the opposite sign.  Thus \(G'\) has one zero
before \(x\), in addition to the zeros supplied by the usual Rolle count
between and at the zeros of \(G\).

If \(b_0b_1>0\), then
\(V(b_0,\ldots,b_N)=V(b_1,\ldots,b_N)\); if \(b_0b_1<0\), the former is one
larger.  The two Rolle bounds prove the claim.
\end{proof}

The next theorem is the positive-coefficient ray theorem of Sendov and
Sendov \cite[Lemma~3.3(ii)]{SendovSendov2021}, with
\cref{lem:descartes} in place of the ordinary Descartes rule.
For completeness, we give their crossing argument in full.  The only
extension is that the exponents below need not be integers.

\begin{theorem}[Sendov--Sendov ray monotonicity]\label{thm:ray}
Let
\[
  f(u)=\sum_{j=0}^N a_ju^{\lambda_j},
  \qquad a_j>0,\qquad 0\le\lambda_0<\cdots<\lambda_N,
\]
where powers are defined using the principal logarithm.  Fix
\(0<\theta<\pi\), and suppose that
\(f(\rho e^{i\theta})\ne0\) for every \(\rho>0\).  Then the continuous lift
\(\varphi\) of the argument along this ray, normalized by
\[
  \lim_{\rho\downarrow0}\varphi(\rho)=\lambda_0\theta,
\]
is nondecreasing.  In particular,
\[
  \lambda_0\theta
  \le \varphi(\rho)\le
  \lambda_N\theta
  \qquad(\rho>0).
\]
\end{theorem}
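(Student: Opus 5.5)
The plan is a crossing argument: fix a generic level, count the solutions of \(\varphi(\rho)\equiv\psi\pmod\pi\) on the ray, and bound that count from above with \cref{lem:descartes}. Write
\[
  f(\rho e^{i\theta})=\sum_{j=0}^N a_j\rho^{\lambda_j}e^{i\lambda_j\theta},
\]
which is valid for the principal branch because \(\abs{\theta}<\pi\). For a real parameter \(\psi\), consider
\[
  F_\psi(\rho)=\operatorname{Im}\bigl(e^{-i\psi}f(\rho e^{i\theta})\bigr)
  =\sum_{j=0}^N a_j\sin(\lambda_j\theta-\psi)\,\rho^{\lambda_j}.
\]
Since \(f\) does not vanish on the ray, the zeros of \(F_\psi\) on \(\R_{>0}\) are exactly the \(\rho\) with \(\varphi(\rho)\equiv\psi\pmod\pi\). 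This identification is what links the argument lift to a real generalized Dirichlet sum.

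First I will pin down the endpoint behaviour of \(\varphi\). As \(\rho\downarrow0\), the term \(a_0\rho^{\lambda_0}e^{i\lambda_0\theta}\) dominates, so \(\varphi(\rho)\to\lambda_0\theta\); this is the stated normalization. As \(\rho\to\infty\), the term \(a_N\rho^{\lambda_N}e^{i\lambda_N\theta}\) dominates. Hence \(\varphi(\rho)\) tends to \(\lambda_N\theta+2\pi k\) for some integer \(k\).

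Next I count sign changes. I choose \(\psi\) generic, so that \(\sin(\lambda_j\theta-\psi)\ne0\) for all \(j\) and no \(\psi+k\pi\) equals \(\lambda_0\theta\), \(\lambda_N\theta\), or a value of \(\varphi\) at which \(\varphi\) fails to cross transversally. Such \(\psi\) avoid only a null set. The points \(x_j=\lambda_j\theta\) increase in \(j\), and the sign of \(\sin(x-\psi)\) flips exactly when \(x\) passes a point of \(\psi+\pi\mathbb Z\). Therefore
\[
  V\bigl(\sin(\lambda_0\theta-\psi),\ldots,\sin(\lambda_N\theta-\psi)\bigr)\le m(\psi),
\]
where \(m(\psi)\) is the number of points of \(\psi+\pi\mathbb Z\) in \((\lambda_0\theta,\lambda_N\theta)\). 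By \cref{lem:descartes}, \(F_\psi\) has at most \(m(\psi)\) positive zeros.

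Now I compare this with a lower bound coming from continuity. The lift \(\varphi\) runs continuously from \(\lambda_0\theta\) to a limit \(\ge\lambda_N\theta\) or \(<\lambda_N\theta\). First I rule out a limit below \(\lambda_N\theta\), i.e.\ \(k<0\). In that case every level in \((\lambda_N\theta+2\pi k,\lambda_N\theta)\) lying above \(\lambda_0\theta\) is still met, and the levels in \(\psi+\pi\mathbb Z\) below \(\lambda_0\theta\) and down to the limit are met as well. Picking \(\psi\) appropriately then gives more than \(m(\psi)\) zeros, a contradiction. A cleaner route is to note that the overshoot cases are handled by the same extra-crossing count as the main case below, so I will treat them together.

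The main case is when \(\varphi\) fails to be nondecreasing. Then there are \(\rho_1<\rho_2\) with \(\varphi(\rho_1)>\varphi(\rho_2)\). Choose a generic level \(\psi+k_0\pi\) strictly between \(\varphi(\rho_2)\) and \(\varphi(\rho_1)\). By continuity and the intermediate value theorem, that level is attained at least once on \((0,\rho_1)\) or at \(\rho_1\)'s side, at least once on \((\rho_1,\rho_2)\), and at least once on \((\rho_2,\infty)\). This gives at least two solutions beyond the single crossing forced for a monotone path. Every other level of \(\psi+\pi\mathbb Z\) in \((\lambda_0\theta,\lambda_N\theta)\) is still attained at least once. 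Hence \(F_\psi\) has at least \(m(\psi)+2>m(\psi)\) positive zeros, contradicting \cref{lem:descartes}. The same counting excludes a limit at infinity other than \(\lambda_N\theta\), so \(\varphi\) is nondecreasing with limits \(\lambda_0\theta\) and \(\lambda_N\theta\), and the two-sided bound follows immediately.

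I expect the main obstacle to be the bookkeeping in this last step. The delicate points are making sure the chosen generic level lies inside \((\lambda_0\theta,\lambda_N\theta)\), or else gets counted as an extra crossing, and handling the endpoint limit at infinity, where \(\varphi\) could a priori converge to \(\lambda_N\theta+2\pi k\) with \(k\neq0\). To settle this, I will first try to choose \(\psi\) so that every crossing is transversal, so that multiplicities never need to be used.
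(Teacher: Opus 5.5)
\textbf{There is a genuine gap, and it is exactly the one you flagged: the terminal value of \(\varphi\).}

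Your counting does exclude \(\lim_{\rho\to\infty}\varphi(\rho)=\lambda_N\theta+2\pi k\) with \(k>0\), since the path then meets at least \(m(\psi)+2k\) levels. It does not exclude \(k<0\). The sentence ``every level in \((\lambda_N\theta+2\pi k,\lambda_N\theta)\) lying above \(\lambda_0\theta\) is still met'' is unjustified. A path that descends from \(\lambda_0\theta\) to \(\lambda_N\theta-2\pi\) meets none of those levels. Moreover, \(F_\psi\) only detects levels modulo \(\pi\). Once \((\lambda_N-\lambda_0)\theta\ge\pi\), a monotone path from \(\lambda_0\theta\) to \(\lambda_N\theta-2\pi\) meets at most \(m(\psi)\) levels of \(\psi+\pi\Z\) for every generic \(\psi\), so there is no contradiction to extract.

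No bookkeeping can repair this, because the statement as written is false. Take \(f(u)=1+u^2\) and \(\theta=3\pi/4\). Then \(f(\rho e^{i\theta})=1-i\rho^2\ne0\), and \(\varphi(\rho)=-\arctan(\rho^2)\) decreases strictly from \(0\) to \(-\pi/2=\lambda_N\theta-2\pi\). This violates both monotonicity and the two-sided bound. The paper's own proof has the same hole: it reads \(\lim_{\rho\to\infty}\varphi(\rho)=\lambda_N\theta\) off the asymptotics of the top term, which determine this limit only modulo \(2\pi\).

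\textbf{The fix is an extra hypothesis plus an argument-principle step.} Apply the argument principle to the boundary of \(\set{u\given \epsilon\le\abs{u}\le R,\ 0\le\arg\parens{u}\le\theta}\). The pieces are the positive axis, the large arc where \(a_Nu^{\lambda_N}\) dominates uniformly, the ray, and the small arc where \(a_0u^{\lambda_0}\) dominates. This gives \(\lim_{\rho\to\infty}\varphi(\rho)=\lambda_N\theta-2\pi Z\), where \(Z\) counts the zeros of \(f\) in the open sector \(0<\arg\parens{u}<\theta\). In the example, \(Z=1\), coming from the zero \(u=i\).

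So you must assume that \(f\) is zero-free on \(\set{\rho e^{i\omega}\given \rho>0,\ 0<\omega\le\theta}\). This costs nothing in \cref{lem:angular}, where \(f\) is zero-free on all of \(\C\setminus(-\infty,0]\). With \(Z=0\) the terminal value is pinned to \(\lambda_N\theta\), and your main case becomes correct. If the chosen level \(\ell\) lies strictly between the endpoint values, it is crossed three times. If it lies outside, it is crossed twice and is not counted by \(m(\psi)\). In either case, one crossing of each remaining level gives \(m(\psi)+2\) distinct positive zeros of \(F_\psi\), contradicting \cref{lem:descartes}. This is the same crossing argument the paper uses.
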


\begin{proof}
The asymptotics of the first and last terms give
\[
  \lim_{\rho\downarrow0}\varphi(\rho)=\lambda_0\theta,
  \qquad
  \lim_{\rho\to\infty}\varphi(\rho)=\lambda_N\theta.
\]
Fix \(\beta\in\R\) such that no \(\lambda_j\theta\) is congruent to
\(\beta\) modulo \(\pi\), and consider
\[
  H_\beta(\rho)
  =\Im\parens*{e^{-i\beta}f(\rho e^{i\theta})}
  =\sum_{j=0}^N a_j\rho^{\lambda_j}
    \sin(\lambda_j\theta-\beta).
\]
Let
\[
  N_\beta
  =\#\set*{\ell\in\beta+\pi\Z\given
    \lambda_0\theta<\ell<\lambda_N\theta}.
\]
As \(j\) increases, the sign of
\(\sin(\lambda_j\theta-\beta)\) can change only when
\(\lambda_j\theta\) crosses one of these \(N_\beta\) levels.  Hence the
coefficient sequence of \(H_\beta\) has at most \(N_\beta\) sign changes.
By \cref{lem:descartes}, \(H_\beta\) has at most \(N_\beta\) positive zeros.

The endpoint values of \(\varphi\) give the reverse count.  For every
\(\ell\in\beta+\pi\Z\) strictly between the two endpoints, continuity gives
some \(\rho>0\) with \(\varphi(\rho)=\ell\), and then
\(H_\beta(\rho)=0\).  Distinct levels give distinct points because \(f\) has
no zero on the ray.  Thus \(H_\beta\) has exactly \(N_\beta\) positive zeros,
one for each level between the endpoints.

Since \(\beta\) was arbitrary, this exact-count conclusion holds for every
admissible \(\beta\).

Suppose that \(\varphi\) is not nondecreasing.  Choose \(0<s<t\) such that
\(\varphi(s)>\varphi(t)\).  We may choose
\[
  \varphi(t)<\ell<\varphi(s)
\]
so that \(\ell\) is not an endpoint and is not congruent modulo \(\pi\) to
any \(\lambda_j\theta\).  Set \(\beta=\ell\).  Then \(\beta\) is admissible,
and the path makes a downward passage through the level \(\ell\) between
\(s\) and \(t\).

If \(\ell\) lies between the endpoint values, the path starts below it and
ends above it.  It therefore meets \(\ell\) at least three times: once before
\(s\), once between \(s\) and \(t\), and once after \(t\).  Together with
one crossing of each of the other \(N_\beta-1\) levels, this gives at least
\(N_\beta+2\) positive zeros of \(H_\beta\).  If \(\ell\) lies outside the
endpoint interval, it is not counted by \(N_\beta\), but the excursion gives
two crossings of \(\ell\) in addition to the \(N_\beta\) crossings between
the endpoints.  Again \(H_\beta\) has at least \(N_\beta+2\) positive zeros,
contradicting \cref{lem:descartes}.
\end{proof}

\section{Proof of the Main Theorem}\label{sec:proof}

Ray monotonicity gives the multivariate estimate that drives the proof.

\begin{lemma}[Angular contraction]\label{lem:angular}
Let \(p\) be a homogeneous \(\GammaSector{1}\)-stable posynomial of degree
\(d\).  Suppose
\[
  w_i=r_ie^{i\theta_i},
  \qquad r_i>0,\qquad 0\le\theta_i\le\phi<\pi.
\]
Then \(p(w)\ne0\).  Moreover, the continuous argument obtained from the path
\(s\mapsto(r_ie^{is\theta_i})_{i=1}^n\), \(0\le s\le1\), satisfies
\[
  0\le\Arg\parens{p(w)}\le d\phi.
\]
\end{lemma}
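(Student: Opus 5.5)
The plan is to reduce both claims to one-variable statements. Nonvanishing will come from homogeneity and a rotation into \(\GammaSector{1}^n\). The argument bound will come from \cref{thm:ray} applied to a one-variable posynomial obtained by raising a single variable \(u\) to the fractional powers \(\theta_i/\phi\). If \(\phi=0\), all \(w_i\) are positive, so \(p(w)>0\) and the claim is trivial; assume \(0<\phi<\pi\) from now on.

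\emph{Step 1: nonvanishing.} Set \(\omega=e^{-i\phi/2}\). Every \(\omega w_i\) has argument \(\theta_i-\phi/2\in[-\phi/2,\phi/2]\), and \(\phi/2<\pi/2\), so \(\omega w\in\GammaSector{1}^n\). All arguments involved stay inside \((-\pi,\pi)\), so the principal branches satisfy \((\omega w_i)^{a_i}=e^{-ia_i\phi/2}w_i^{a_i}\). Since every exponent vector has total degree \(d\), this gives
\[
  p(\omega w)=e^{-id\phi/2}p(w).
\]
Stability gives \(p(\omega w)\ne0\), hence \(p(w)\ne0\). The same argument applies verbatim to any point whose coordinates have arguments in \([0,\psi]\) for some \(\psi<\pi\).

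\emph{Step 2: a one-variable reduction.} Define
\[
  f(u)=p\bigl((r_iu^{\theta_i/\phi})_{i=1}^n\bigr)=\sum_{a\in\cA}c_ar^a\,u^{\lambda(a)},
  \qquad \lambda(a)=\sum_i a_i\theta_i/\phi\in[0,d].
\]
Consider \(u=\rho e^{i\psi}\) with \(0\le\psi\le\phi\). Then \(u^{\theta_i/\phi}=\rho^{\theta_i/\phi}e^{i\psi\theta_i/\phi}\) has argument \(\psi\theta_i/\phi\le\theta_i<\pi\), so the identity \((u^{\theta_i/\phi})^{a_i}=u^{a_i\theta_i/\phi}\) holds for principal branches. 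After merging exponent vectors with equal \(\lambda(a)\), \(f\) has exactly the form required in \cref{thm:ray}: positive coefficients and distinct exponents in \([0,d]\).

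By Step 1, \(f\) is zero-free on the closed sector \(S=\{\rho e^{i\psi}:\rho>0,\ 0\le\psi\le\phi\}\), because the substituted coordinates have arguments in \([0,\psi]\subseteq[0,\phi]\). At \(u=e^{is\phi}\) the substituted point is exactly \((r_ie^{is\theta_i})_i\). Thus the path in the lemma is the image of the arc \(s\mapsto e^{is\phi}\), and its endpoint \(u=e^{i\phi}\) gives \(f(e^{i\phi})=p(w)\).

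\emph{Step 3: comparing lifts.} \(S\) is simply connected and \(f\) is zero-free on it, so \(\arg f\) has a continuous lift \(\Phi\) on all of \(S\), normalized by \(\Phi=0\) on the positive axis, where \(f>0\). Restricted to the arc, \(\Phi\) is the lift specified in the lemma. Restricted to the ray \(\psi=\phi\), \(\Phi\) is some lift of the argument along that ray. To identify it with the lift \(\varphi\) of \cref{thm:ray}, look near \(u=0\): there \(f(u)\sim a_0u^{\lambda_0}\) uniformly in \(\psi\), so \(\Phi(\rho e^{i\psi})\to\lambda_0\psi\) as \(\rho\downarrow0\). This limit holds at \(\psi=0\) by the normalization and is continuous in \(\psi\), so it equals \(\lambda_0\phi\) on the ray, which is the normalization in \cref{thm:ray}. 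That theorem then gives
\[
  0\le\lambda_0\phi\le\Arg p(w)=\Phi(e^{i\phi})\le\lambda_N\phi\le d\phi.
\]

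The main obstacle I expect is Step 3, namely justifying that the two lifts coincide. This needs a uniform small-\(\rho\) asymptotic, with the lower-order terms \(o(\rho^{\lambda_0})\) uniformly in \(\psi\), so that the limit \(\lambda_0\psi\) is attained continuously across the sector. Secondary care is needed with the principal-branch identities in Steps 1 and 2; these hold precisely because every argument involved stays strictly below \(\pi\).
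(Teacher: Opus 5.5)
Your proposal is correct and follows essentially the paper's route. You use the same substitution \(f(u)=p\bigl((r_iu^{\theta_i/\phi})_i\bigr)\), the same homogeneity trick of rotating into \(\GammaSector{1}^n\) to get zero-freeness, and then apply \cref{thm:ray} on the ray of angle \(\phi\). The differences are minor: you prove zero-freeness only on the closed sector \(0\le\arg u\le\phi\) rather than on all of \(\C\setminus(-\infty,0]\), and your uniform small-\(\rho\) argument spells out the branch identification that the paper states in a single sentence.
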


\begin{proof}
The claim is immediate when \(\phi=0\), so assume \(\phi>0\) and set
\(\tau_i=\theta_i/\phi\).  Define the one-variable posynomial
\[
  f(u)=p(r_1u^{\tau_1},\ldots,r_nu^{\tau_n})
  =\sum_{a\in\cA}c_ar^a u^{\ip{a}{\tau}}.
\]
Terms with equal exponents may be combined, so \(f\) has positive
coefficients and strictly increasing real exponents.  Homogeneity gives
\[
  0\le\ip{a}{\tau}\le d
  \qquad(a\in\cA).
\]
Here we used \(\tau_i\in[0,1]\), which ensures that principal logarithms give
\((r_iu^{\tau_i})^{a_i}=r_i^{a_i}u^{a_i\tau_i}\) throughout
\(\C\setminus(-\infty,0]\).

We claim that \(f\) is zero-free on
\(\C\setminus(-\infty,0]\).  Write \(u=\rho e^{i\omega}\) with
\(\abs{\omega}<\pi\).  The arguments of the points
\(r_iu^{\tau_i}\) lie between \(0\) and \(\omega\), an interval of width
less than \(\pi\).  After multiplying all coordinates by
\(e^{-i\omega/2}\), they lie in the open right half-plane.  The principal
logarithms do not cross their branch cut during this rotation, and
homogeneity gives
\[
  p(r_1u^{\tau_1},\ldots,r_nu^{\tau_n})
  =e^{id\omega/2}
    p(e^{-i\omega/2}r_1u^{\tau_1},\ldots,
      e^{-i\omega/2}r_nu^{\tau_n}).
\]
The last factor is nonzero by stability, proving the claim.

Apply \cref{thm:ray} to \(f\) on the ray of angle \(\phi\).  Its smallest
and largest exponents lie in \([0,d]\), so
\[
  0\le\Arg\parens{f(e^{i\phi})}\le d\phi.
\]
Since \(f\) is zero-free on \(\C\setminus(-\infty,0]\) and positive on the
positive real axis, the lift in \cref{thm:ray} is the same branch obtained by
moving from \(1\) to \(e^{i\phi}\) in the upper half-plane.  Finally,
\(f(e^{i\phi})=p(w)\), and this path is precisely the one in the statement.
\end{proof}

We use the following standard consequence of the Pick representation.  It is
included to make clear the precise analytic input.

\begin{lemma}[Pick concavity]\label{lem:pick}
Let \(h\) be holomorphic on \(\C\setminus(-\infty,0]\), positive on
\(\R_{>0}\), and suppose that
\[
  0\le\arg\parens{h(z)}\le\arg\parens{z}
  \qquad\text{when }\Im\parens{z}>0.
\]
Then \(h\) is concave on \(\R_{>0}\).
\end{lemma}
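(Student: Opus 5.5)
The plan is to show that \(h\) is a Pick (Nevanlinna) function that is real on the positive axis, write down its integral representation, and read off concavity from the sign of the second derivative.

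\textbf{Step 1: \(h\) maps the upper half-plane into the closed upper half-plane.} For \(\Im z>0\) we have \(0<\arg z<\pi\). The hypothesis \(0\le\arg h(z)\le\arg z\) then gives \(\Im h(z)\ge0\). Since \(h\) is holomorphic on \(\C\setminus(-\infty,0]\), it is a Pick function there.

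\textbf{Step 2: \(h\) is symmetric under conjugation.} Because \(h\) is real on \(\R_{>0}\), the Schwarz reflection principle gives \(h(\bar z)=\overline{h(z)}\). So \(h\) is a Pick function that extends analytically across \((0,\infty)\).

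\textbf{Step 3: the integral representation.} By the Nevanlinna representation, and since \(h\) continues analytically across \((0,\infty)\), the representing measure lives on \((-\infty,0]\):
\[
  h(z)=a+bz+\int_{(-\infty,0]}\parens*{\frac{1}{t-z}-\frac{t}{1+t^2}}\,d\mu(t),
\]
with \(a\in\R\), \(b\ge0\), and \(\mu\) a positive measure satisfying \(\int d\mu/(1+t^2)<\infty\). To justify that the measure vanishes on \((0,\infty)\), I would use Stieltjes inversion: there \(\Im h(x+i\epsilon)\to\Im h(x)=0\) locally uniformly.

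\textbf{Step 4: differentiate twice.} For \(x>0\) and \(t\le0\), the quantity \(t-x\) is negative. Differentiating under the integral sign is legitimate, because \(|t-x|^{-3}\) is dominated by a multiple of \((1+t^2)^{-1}\) locally uniformly in \(x>0\) once \(t\) is away from \(x\). This gives
\[
  h''(x)=\int\frac{2}{(t-x)^3}\,d\mu(t)\le0 .
\]
Hence \(h\) is concave on \(\R_{>0}\).

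\textbf{Where the hypotheses are used.} The upper bound \(\arg h\le\arg z\) is not needed for the concavity argument; it only matters for the later application, where it controls growth. The lower bound \(\arg h(z)\ge0\) is the essential input, because it is what makes \(h\) a Pick function.

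\textbf{Main obstacle.} The delicate step is Step 3: checking that the measure is supported on \((-\infty,0]\) and that differentiation under the integral is legitimate. If I preferred to avoid the full representation, I would take a conformal alternative. Set \(g(\zeta)=h(\zeta^2)\) on the right half-plane, or use the change of variable \(z=e^{w}\) on the strip. But the Stieltjes-inversion route above is the most direct, so I would commit to it.
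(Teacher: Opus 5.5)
Your proof is correct and takes essentially the same route as the paper: both write \(h\) via the integral representation of a Pick function and read off concavity one kernel at a time. The paper quotes the complete Bernstein representation \(h(x)=a+bx+\int_{[0,\infty)}\frac{x}{x+t}\,d\mu(t)\) directly, whereas you derive the equivalent statement from the general Nevanlinna representation by using Stieltjes inversion to confine \(\mu\) to \((-\infty,0]\); this also shows, as you observe, that only \(\Im h\ge0\) and realness on \(\R_{>0}\) are used.
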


\begin{proof}
The hypotheses say that \(h\) is a complete Bernstein function.  Its Pick
representation \cite[Theorem~6.2]{SchillingSongVondracek2012} has the form
\[
  h(x)=a+bx+\int_{[0,\infty)}\frac{x}{x+t}\,d\mu(t),
  \qquad a,b\ge0,
\]
for a positive measure \(\mu\) satisfying the usual integrability condition.
For every \(t\ge0\), the function \(x\mapsto x/(x+t)\) is concave on
\(\R_{>0}\).  The displayed representation therefore proves the claim.
\end{proof}

\begin{proof}[Proof of \cref{thm:main}]
Fix \(x,y\in\R_{>0}^n\), and define
\[
  g(z)=p(zx+y),
  \qquad z\in\C\setminus(-\infty,0].
\]
Each coordinate \(zx_i+y_i\) avoids the branch cut on this domain, so \(g\)
is holomorphic there.  If \(\Im\parens{z}>0\), then
\[
  0\le\arg\parens{zx_i+y_i}\le\arg\parens{z}<\pi
  \qquad(1\le i\le n).
\]
By \cref{lem:angular}, \(g(z)\ne0\), and the branch of the argument that is
zero on the positive real axis satisfies
\[
  0\le\Arg\parens{g(z)}\le d\,\arg\parens{z}.
\]
Indeed, the angular lift in \cref{lem:angular} varies continuously with
\(z\) and is zero on the positive real axis, so it is the branch normalized
there.
The corresponding statement in the lower half-plane follows by conjugation.
Thus \(g\) is zero-free on \(\C\setminus(-\infty,0]\).

Let \(L\) be the holomorphic logarithm of \(g\) which is real on
\(\R_{>0}\), and set
\[
  h(z)=\exp\parens*{L(z)/d}.
\]
Then \(h\) is positive on \(\R_{>0}\), and angular contraction gives
\[
  0\le\arg\parens{h(z)}\le\arg\parens{z}
  \qquad\text{for }\Im\parens{z}>0.
\]
By \cref{lem:pick}, \(h\) is concave on \(\R_{>0}\).

Define the \emph{perspective} of \(h\) by
\[
  H(s,t)=t h(s/t),
  \qquad s,t>0.
\]
We verify that \(H\) is concave.  Fix \((s_1,t_1),(s_2,t_2)\in\R_{>0}^2\)
and \(0\le\lambda\le1\), and set
\[
  T=\lambda t_1+(1-\lambda)t_2,
  \qquad
  \eta=\frac{\lambda t_1}{T}.
\]
Then
\[
  \frac{\lambda s_1+(1-\lambda)s_2}{T}
  =
  \eta\frac{s_1}{t_1}+(1-\eta)\frac{s_2}{t_2}.
\]
Concavity of \(h\) therefore gives
\begin{align*}
  H\parens*{\lambda s_1+(1-\lambda)s_2,T}
  &\ge
  T\parens*{\eta h(s_1/t_1)+(1-\eta)h(s_2/t_2)} \\
  &=
  \lambda H(s_1,t_1)+(1-\lambda)H(s_2,t_2).
\end{align*}
Thus \(H\) is concave.  Homogeneity of \(p\) gives
\[
  H(s,t)=p(sx+ty)^{1/d}.
\]
The perspective extends continuously to \(\R_{\ge0}^2\), so its concavity also
holds on the boundary.  Comparing the points \((1,0)\) and \((0,1)\) proves
\[
  p(\lambda x+(1-\lambda)y)^{1/d}
  \ge
  \lambda p(x)^{1/d}+(1-\lambda)p(y)^{1/d}
\]
for \(0\le\lambda\le1\).  Hence \(p^{1/d}\) is concave.  Since the logarithm
is increasing and concave, \(\log p=d\log(p^{1/d})\) is concave as well.
\end{proof}

\begin{proof}[Proof of \cref{cor:sector}]
Set \(q(z)=p(z^\alpha)\), with powers taken coordinatewise.  This is a
homogeneous posynomial of degree \(\alpha d\).  If
\(z_i\in\GammaSector{1}\), then
\(z_i^\alpha\in\GammaSector{\alpha}\), so \(q\) is
\(\GammaSector{1}\)-stable.  Applying \cref{thm:main} to \(q\) gives the
claim.
\end{proof}

\section{Consequences for Sampling}\label{sec:consequences}

For a distribution \(\mu\) on \(\binom{[n]}{k}\), write
\[
  g_\mu(z)=\sum_{S\in\binom{[n]}{k}}\mu(S)z^S.
\]
By \cref{cor:sector}, \(\GammaSector{\alpha}\)-stability of \(g_\mu\)
implies \(\alpha\)-fractional log-concavity.  Sector-stability is preserved
by nonzero links and positive external fields, so the same conclusion holds
for every link under every external field.  We now combine this observation
with existing sampling theorems.

For \(\ell\le k\), the \(k\leftrightarrow\ell\) down-up walk first chooses a
uniform \(\ell\)-subset of its current \(k\)-set and then resamples a
containing \(k\)-set according to \(\mu\).

\begin{corollary}[Mixing from a sector-stability certificate]\label{cor:mixing}
Suppose that \(g_\mu\) is \(\GammaSector{\alpha}\)-stable, and let
\(r=\ceil*{1/\alpha}\).  If \(k\ge r\), the
\(k\leftrightarrow k-r\) down-up walk has modified log-Sobolev constant
\[
  \Omega_\alpha\parens{k^{-1/\alpha}}.
\]
Consequently, from a state \(S_0\) in the support of \(\mu\), its total
variation mixing time satisfies
\[
  t_{\mathrm{mix}}(\varepsilon)
  =O_\alpha\parens*{
    k^{1/\alpha}
    \parens*{\log\parens*{1+\log\frac1{\mu(S_0)}}
      +\log\frac1\varepsilon}
  }.
\]
\end{corollary}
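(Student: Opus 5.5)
We reduce \cref{cor:mixing} to the known mixing theorem for fractionally log-concave distributions, with \cref{cor:sector} supplying the hypothesis that earlier work derived with a factor-of-two loss.

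\textbf{Step 1: fractional log-concavity under links and external fields.} By \cref{cor:sector}, \(\GammaSector{\alpha}\)-stability of \(g_\mu\) makes \(g_\mu\) \(\alpha\)-fractionally log-concave. The spectral-independence and entropic-independence machinery needs this for every link \(\mu_T\) (obtained by conditioning on \(T\subseteq S\)) and every external field \(\lambda\in\R_{>0}^n\). For external fields, \(g_{\lambda*\mu}(z)=g_\mu(\lambda_1z_1,\ldots,\lambda_nz_n)\) up to normalization, and positive scaling preserves \(\GammaSector{\alpha}\). For links, \(g_{\mu_T}\) is, up to normalization, \(\partial^T g_\mu\) restricted to \(z_T=0\)-free coordinates. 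Equivalently, it is the coefficient of \(z^T\) in a multiaffine polynomial, which is the limit of \(z^{-T}g_\mu\) as \(z_i\to\infty\) inside the sector for \(i\in T\). Hurwitz's theorem then shows that this limit is either identically zero or zero-free on \(\GammaSector{\alpha}^{n-|T|}\). Applying \cref{cor:sector} to each nonzero link gives \(\alpha\)-fractional log-concavity of every link under every external field.

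\textbf{Step 2: entropic independence.} We invoke the theorem of \cite{AnariJainKoehlerPhamVuong2021a}: if every link of \(\mu\) is \(\alpha\)-fractionally log-concave, then \(\mu\) is \((1/\alpha)\)-entropically independent. Together with the local-to-global argument for down-up walks, the \(k\leftrightarrow k-r\) walk with \(r=\ceil{1/\alpha}\) then has modified log-Sobolev constant \(\Omega_\alpha(k^{-1/\alpha})\). Roughly, this constant is a product \(\prod_j (1-\tfrac{1}{\alpha(k-j)})\)-type telescoping factor of order \(k^{-1/\alpha}\), and the condition \(k\ge r\) makes the walk well defined. The only change from the earlier papers is that the parameter they fed in was \(\beta/2\), and \cref{cor:sector} replaces it by \(\alpha\). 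So the citation applies verbatim with the improved parameter. This is where the quadratic improvement appears: the earlier bound \(k^{2/\beta}\) becomes \(k^{1/\alpha}\).

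\textbf{Step 3: mixing time.} The mixing bound is standard. A modified log-Sobolev constant \(\rho\) gives \(\mathrm{KL}(P^t(S_0,\cdot)\|\mu)\le e^{-\rho t}\log(1/\mu(S_0))\), and Pinsker's inequality converts this to total variation. This yields \(t_{\mathrm{mix}}(\varepsilon)=O(\rho^{-1}(\log\log(1/\mu(S_0))+\log(1/\varepsilon)))\). Writing \(\log(1+\log(1/\mu(S_0)))\) handles the case where \(\mu(S_0)\) is near \(1\).

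\textbf{Main obstacle.} The delicate step is the link-closure argument in Step 1: we must check that the Hurwitz limit stays inside the correct sector domain with principal branches. Since \(g_\mu\) is a multiaffine polynomial, branches are not an issue. The other point to verify is that the cited entropic-independence theorem is stated for fractional log-concavity with real \(\alpha\), rather than only \(\alpha=1/m\). If it requires \(1/\alpha\) to be an integer, we would instead apply it with \(\alpha'=1/\ceil{1/\alpha}\le\alpha\). This costs only constants depending on \(\alpha\) in the exponent, which may change \(k^{1/\alpha}\) to \(k^{r}\). In that case we would track the argument more carefully, using the real-parameter versions of the entropic-independence bounds.
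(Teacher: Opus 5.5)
Your proposal is correct and follows the paper's route. \cref{cor:sector} gives $\alpha$-fractional log-concavity; closure of sector-stability under nonzero links and positive external fields extends this to every link; and Theorem~5 of \cite{AnariJainKoehlerPhamVuong2021a} then supplies the modified log-Sobolev constant and the mixing bound, with your Hurwitz argument for link closure and the Pinsker conversion filling in details the paper leaves implicit (the fallback in your last paragraph is not needed, as the paper applies the cited theorem directly for real $\alpha$ with $r=\lceil 1/\alpha\rceil$).
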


This follows from the entropic-independence theorem for fractionally
log-concave distributions
\cite[Theorem~5]{AnariJainKoehlerPhamVuong2021a}.

\begin{corollary}[Domain sparsification]\label{cor:sparsification}
Suppose that \(g_\mu\) is \(\GammaSector{\alpha}\)-stable and that \(\mu\)
satisfies the oracle and marginal-estimation assumptions of the
domain-sparsification framework.  Then sampling from \(\mu\) reduces to
sampling from distributions obtained by sparse external fields supported on
\[
  n^{1-\alpha}\poly(k)
\]
elements.
\end{corollary}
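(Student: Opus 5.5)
The plan is to combine \cref{cor:sector} with the domain-sparsification framework of \cite{AnariDerezinskiVuongYang2022}, much as \cref{cor:mixing} combines it with entropic independence. That framework takes as input a distribution \(\mu\) on \(\binom{[n]}{k}\) such that every tilt \(\mu*\lambda\) by an external field \(\lambda\in\R_{>0}^n\) is \(\alpha\)-fractionally log-concave. Under this hypothesis it reduces sampling from \(\mu\) to sampling from distributions restricted to a random subdomain of size \(n^{1-\alpha}\poly(k)\). Concretely, the framework draws the elements with probability proportional to approximate marginal overestimates and applies a sparse external field. The corollary therefore follows once we verify this input hypothesis, together with the oracle and marginal-estimation assumptions, which the statement takes for granted.

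The first step is closure under external fields. For \(\lambda\in\R_{>0}^n\) the tilted generating polynomial is \(g_{\mu*\lambda}(z)=g_\mu(\lambda_1z_1,\ldots,\lambda_nz_n)/g_\mu(\lambda)\). Multiplying by a positive real preserves every sector \(\GammaSector{\alpha}\). Hence \(g_{\mu*\lambda}\) has no zeros on \(\GammaSector{\alpha}^n\) and is \(\GammaSector{\alpha}\)-stable. It is still a homogeneous polynomial of degree \(k\) with positive coefficients, so it is in particular a homogeneous posynomial.

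The second step applies \cref{cor:sector}: \(x\mapsto\log g_{\mu*\lambda}(x^\alpha)\) is concave on \(\R_{>0}^n\), i.e.\ \(\mu*\lambda\) is \(\alpha\)-fractionally log-concave for every \(\lambda\). If the framework also needs the property for links (conditionals) or for the tilted-and-restricted distributions it generates internally, two closure facts cover this. Restriction to a subset of coordinates is the limit \(\lambda_i\to0\) for the excluded coordinates, and fractional log-concavity is closed under pointwise limits of concave functions, provided the limit is not identically zero. Nonzero links, i.e.\ derivatives \(\partial_i g\) obtained as the coefficient of \(z_i\) (setting \(z_i\to\infty\) after rescaling), preserve sector-stability by Hurwitz's theorem, as remarked before \cref{cor:mixing}.

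The final step substitutes these facts into the framework's main reduction theorem. There the sparsity bound is \(n^{1-\alpha}\poly(k)\) whenever the tilted distributions are \(\alpha\)-fractionally log-concave. The main obstacle is bookkeeping rather than mathematics: we must check that the exact hypothesis of that theorem, namely fractional log-concavity of every tilt with parameter \(\alpha\), is precisely what \cref{cor:sector} provides. Earlier uses of the framework went through \cite[Lemma~69]{AlimohammadiAnariShiragurVuong2021}, which gave only parameter \(\alpha/2\) and hence subdomain size \(n^{1-\alpha/2}\poly(k)\). Our improvement is entirely in replacing that lemma by \cref{cor:sector}.
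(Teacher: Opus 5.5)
Your proposal is correct and matches the paper's argument. \(\GammaSector{\alpha}\)-stability is preserved by positive external fields and nonzero links, \cref{cor:sector} then gives \(\alpha\)-fractional log-concavity for every such distribution, and this is used as a black box in the domain-sparsification theorem of \cite{AnariDerezinskiVuongYang2022} in place of the parameter \(\alpha/2\) from \cite[Lemma~69]{AlimohammadiAnariShiragurVuong2021}; your remarks on zero fields via limits and on links via Hurwitz's theorem just spell out the closure facts the paper states in one sentence.
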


This is the domain-sparsification theorem of Anari, Derezi{\'n}ski, Vuong,
and Yang \cite{AnariDerezinskiVuongYang2022}, with the improved fractional
log-concavity parameter supplied by \cref{cor:sector}.

\paragraph{Applications.}
The generating polynomials for fixed-size matchings and nonsymmetric
\(k\)-DPPs are \(\GammaSector{1/2}\)-stable
\cite[Lemmas~9 and~12]{AlimohammadiAnariShiragurVuong2021}; the matching
certificate builds on the Heilmann--Lieb theorem \cite{HeilmannLieb1972}.
For fixed-size matchings, duality gives the same certificate for the
matched-vertex encoding.  Thus \cref{cor:mixing} gives two-site down-up walks
with modified log-Sobolev constant \(\Omega(k^{-2})\), where \(k\) denotes
the degree of the corresponding homogeneous encoding.  Under the assumptions
of \cref{cor:sparsification}, both families reduce to sparse domains of size
\(n^{1/2}\poly(k)\), improving the \(n^{3/4}\poly(k)\) bound obtained from
the earlier fractional log-concavity parameter.

For the standard homogeneous encodings of the full monomer-dimer model and
the full nonsymmetric DPP, the same two-site entropy bounds were already
obtained in \cite{AnariJainKoehlerPhamVuong2021a} from separate proofs of
\(1/2\)-fractional log-concavity.  Here they, as well as the
fixed-cardinality statements, follow uniformly from the sector-stability certificate.

\section*{Acknowledgments and AI Use}

We thank Misha Ivkov for discussions about the problem.  We also thank Amire
Bendjeddou, Thu Hien Nguyen, Greg Knese, and Kevin Shu for discussions at the
2025 AIM workshop \emph{The Geometry of Polynomials in Combinatorics and
Sampling}.  We thank AIM for its hospitality.

The author formulated the theorem and its connections to sector-stability,
fractional log-concavity, entropic independence, and sampling.  A GPT~5.5 Pro
Extended interaction suggested the angular-contraction route.  In subsequent
work, Codex (GPT~5.6 Sol) extracted and formalized the real-exponent extension
of the Sendov--Sendov crossing argument used to complete the posynomial proof.
Codex also assisted with assembling and typesetting the manuscript.  The
author checked and revised the proof and is responsible for all statements
and arguments in the final manuscript.

\PrintBibliography

\end{document}